\documentclass[aps,pra,onecolumn,nofootinbib,superscriptaddress,showpacs]{revtex4}
\usepackage{amsmath,amssymb,amsthm}
\usepackage{amsfonts}
\usepackage{amssymb}
\usepackage{graphicx}
\usepackage{comment}
\usepackage{color}
\usepackage{soul}
\usepackage{natbib}
\newtheorem{theorem}{Theorem}

\newtheorem{prop}[theorem]{Proposition}

\usepackage[title]{appendix}
\usepackage{mathrsfs}

\begin{document}

\title{Entropic Uncertainty Relations for Mutually Unbiased Operator Frames}

\author{Jesni Shamsul Shaari}
\affiliation{Department of Physics, Faculty of Science, International Islamic University Malaysia (IIUM),
Jalan Sultan Ahmad Shah, Bandar Indera Mahkota, 25200 Kuantan, Pahang, Malaysia}
\affiliation{IIUM Photonics and Quantum Centre, International Islamic University Malaysia (IIUM),
Jalan Sultan Ahmad Shah, Bandar Indera Mahkota, 25200 Kuantan, Pahang, Malaysia}

\author{Stefano Mancini}
\affiliation{School of Science \& Technology, University of Camerino, I-62032 Camerino, Italy}
\affiliation{ INFN Sezione di Perugia, I-06123 Perugia, Italy}

\date{\today}

\begin{abstract}
We develop an operator-frame formulation of entropic uncertainty relations in the Hilbert--Schmidt space of operators. For general continuous indexed operator frames, we derive an entropic uncertainty relation for the associated coefficient distributions by combining endpoint norm estimates with Riesz--Thorin interpolation. We then identify a distinguished class of mutually unbiased operator frames, defined through constant-modulus trace overlaps. Under suitable structural conditions, the corresponding coefficient amplitudes are related by a bilinear Fourier transform, leading to a stronger Hirschman--Beckner-type entropic uncertainty relation. As canonical realizations, we consider Weyl displacement operators and Wigner kernels, as well as Cartesian dyadic frames generated by position and momentum eigenstates. These examples recover familiar continuous-variable Fourier dualities while extending entropic uncertainty relations beyond measurement outcomes to operator representations themselves.
\end{abstract}

\pacs{03.65.Aa, 03.67.-a}

\maketitle

\section{Introduction}

\noindent The uncertainty principle occupies a central role in quantum theory,
originating from Heisenberg's observation that complementary physical
quantities such as position and momentum cannot be simultaneously
specified with arbitrary precision~\cite{Heisenberg1927}. In its standard formulation, this complementarity
arises from the Fourier relationship between the position-space and
momentum-space wavefunction amplitudes. More precisely, a quantum state whose
wavefunction is sharply localized in position space necessarily
possesses a broadly distributed momentum-space wavefunction, and vice
versa. The resulting uncertainty relations may be
expressed either in terms of variances, as in the
Kennard--Robertson formulation~\cite{Robertson1929}, or,
more generally, through entropic measures associated with the
corresponding probability distributions~\cite{Hirschman1957,Beckner1975,Bialynicki1975}. An extensive review of entropic uncertainty relations and their
applications may be found in Ref.~\cite{Coles2017}. In continuous-variable systems, entropic uncertainty relations have been extensively developed beyond the canonical position--momentum setting, including formulations for arbitrary quadratures, multimode systems, and (unitary) linear canonical transformations~\cite{Cerf1,Cerf2}. Uncertainty relations have also been formulated in the framework of countable frame representations \cite{Ricaud2013}.
In finite-dimensional systems, entropic uncertainty
relations were later formulated in terms of overlaps
between orthonormal bases by Maassen and
Uffink~\cite{Maassen1988}. Mutually unbiased bases (MUBs) \cite{Schwinger,Durt}
correspond to the maximally complementary case,
yielding the strongest entropic lower bounds. Simply put, two orthonormal bases are said to be
mutually unbiased when complete knowledge of a state in one basis
implies maximal uncertainty in the other. In finite-dimensional quantum
systems, mutually unbiased bases have found important applications in
quantum information, quantum tomography, and quantum cryptography. The notion of mutual unbiasedness may also
be extended to continuous-variable systems, where the
position and momentum eigenbases provide a canonical
example~\cite{Weigert2008}. 

The idea of complementarity has also been extended from state vectors to
operator spaces through the notion of mutually unbiased unitary bases
(MUUBs), where unitary operators form orthogonal bases with constant
Hilbert--Schmidt overlaps \cite{Scott2008, JRM2015}. In this setting,
maximal entropic uncertainty between operator
measurements is likewise associated with mutually
unbiased operator structures, closely paralleling the
role played by MUBs for ordinary observables \cite{JM2020, JRM2021}.

Motivated by these developments, in the present work we formulate a general operator-frame approach for continuous-variable systems. Beginning with families of operators acting on a Hilbert space, we derive a general entropic uncertainty relation by establishing suitable endpoint norm estimates and applying the Riesz--Thorin interpolation theorem (see theorem IX.17 of \cite{ReedSimon2}). We then introduce a notion of mutual unbiasedness for operator families, defined through constant-modulus trace overlaps between operator pairings. Under additional conditions ensuring a sufficiently regular overlap structure, this mutual unbiasedness condition induces a generalized Fourier transform relation between the corresponding operator coefficient functions, yielding an explicit realization of the general uncertainty framework. Unlike conventional continuous-variable entropic uncertainty relations, which concern probability distributions associated with measurements of observables, the present formulation is based on coefficient distributions arising from operator expansions. The resulting uncertainty relation expresses the impossibility of simultaneously constructing operator representations that are sharply localized in mutually unbiased operator frames. While the initial derivation shares certain interpolation-theoretic ingredients with frame-based uncertainty relations \cite{Ricaud2013}, the present framework is formulated in terms of continuous operator frames and culminates in an operator-level notion of mutual unbiasedness. This additional structure enables the derivation of Fourier-type entropic uncertainty relations, yielding stronger lower bounds than those obtained in the general operator-frame setting.

We then specialize the general framework to two physically relevant
continuous-variable settings. First, choosing an antisymmetric
bilinear form on phase space yields the symplectic Fourier structure
associated with displacement operators and Wigner kernels \cite{CahillGlauber19691,CahillGlauber19692,Royer1977}. Second, choosing a symmetric bilinear form in Cartesian
space leads to a dyadic operator representation associated
with position and momentum variables, for which the
resulting operator-kernel Fourier duality naturally
connects to the ordinary Heisenberg position--momentum
uncertainty principle.
In this sense, the present framework extends the usual notion of
Fourier complementarity from wavefunction amplitudes to mutually
unbiased operator representations in Hilbert--Schmidt space.

With this motivation in place, we now develop the
general operator-frame framework from which the
resulting Fourier dualities and entropic uncertainty
relations emerge.

\section {Continuous Operator Frames and Hilbert--Schmidt Representations}
\noindent Let \(\mathcal H=L^2(\mathbb R)\) be the space of square-integrable functions on $\mathbb{R}$. Throughout the paper, given the continuous-variable setting, the label space may equivalently be viewed as $\mathbb C$ or $\mathbb R^2$, since
$\mathbb C \cong \mathbb R^2$ as real vector spaces. While the complex
notation is natural for operator formulations in quantum
mechanics, we shall frequently use the equivalent real two-dimensional
representation when discussing bilinear kernels and symplectic
structures. 

We consider continuous families of operators
$\{O(\lambda)\}$ and
$\{P(\beta)\}$ with $\lambda,\beta\in\mathbb R^2$ forming generalized orthogonal
continuous operator frames, satisfying
\begin{equation}
\mathrm{Tr}\!\left[O^\dagger(\lambda) O(\lambda')\right]
=C_O\delta^{(2)}(\lambda - \lambda'),~~~\mathrm{Tr}\!\left[P^\dagger(\beta) P(\beta')\right]
C_P\delta^{(2)}(\beta - \beta').
\end{equation}
for some constant $C_O, C_P > 0$. 
We shall study representations of Hilbert--Schmidt operators
$A\in B_2(\mathcal H)$, where $B_2(\mathcal H)
=
\left\{
A\in B(\mathcal H)
\;\middle|\;
\mathrm{Tr}(A^\dagger A)<\infty
\right\}
$
is the space of Hilbert–Schmidt operators on the Hilbert space $\mathcal H$. This space is equipped with the inner product
\begin{equation}
\langle A, B \rangle := \mathrm{Tr}(A^\dagger B),
\end{equation}
which induces the norm $\|A\|_{\mathrm{HS}}^2 = \mathrm{Tr}(A^\dagger A)$.
Any such operator $A$ then admits the expansion
\begin{equation}\label{expA}
A = \frac{1}{C_O} \int_{\mathbb{R}^2} a_O(\lambda)\, O(\lambda)\, d^2\lambda=\frac{1}{C_P} \int_{\mathbb{R}^2} a_P(\beta)\, P(\beta)\, d^2\beta
\end{equation}
where the coefficient function $a_O(\lambda),a_P(\beta) \in L^2(\mathbb{R}^2)$ are given by
\begin{equation}\label{coeff}
a_O(\lambda) = \mathrm{Tr}\!\left[O^\dagger(\lambda) A\right],~~a_P(\beta) = \mathrm{Tr}\!\left[P^\dagger(\beta) A\right].
\end{equation}
The coefficient functions \(a_O(\lambda)\) and \(a_P(\beta)\) constitute two alternative representations of the same Hilbert--Schmidt operator. To quantify the extent to which these representations may be simultaneously localized, we seek an entropic uncertainty relation between the corresponding coefficient distributions. We begin by introducing the normalized coefficient amplitudes
\(f_O(\lambda)\) and \(f_P(\beta)\) as
\begin{equation}\label{fofp}
f_O(\lambda)
=
\frac{a_O(\lambda)}
{\sqrt{C_O}\|A\|_{\mathrm{HS}}},
\qquad
f_P(\beta)
=
\frac{a_P(\beta)}
{\sqrt{C_P}\|A\|_{\mathrm{HS}}},
\end{equation}
with $\|A\|_{\mathrm{HS}}^2
=
\mathrm{Tr}(A^\dagger A)$,
so that the corresponding coefficient densities, $|f_O(\lambda)|^2$ and $|f_P(\beta)|^2$ normalised,
\begin{equation}\label{endpoint2}
\int_{\mathbb R^2}|f_O(\lambda)|^2\,d^2\lambda
=
\int_{\mathbb R^2}|f_P(\beta)|^2\,d^2\beta
=
1.
\end{equation}
Let us define an operator $T$, such that $f_P=Tf_O$. Making use of equations (\ref{expA}) and (\ref{coeff}) we write,
\begin{equation}
f_P(\beta)
=
(Tf_O)(\beta)
=
\int_{\mathbb R^2}
K(\beta,\lambda)\,
f_O(\lambda)\,
d^2\lambda,
\label{transform}
\end{equation}
with normalized kernel
\begin{equation}\label{kernel}
K(\beta,\lambda)
=
\frac{
\operatorname{Tr}
\!\left[
P^\dagger(\beta)O(\lambda)
\right]
}
{\sqrt{C_OC_P}}.
\end{equation}
For a well-defined and uniformly bounded $K(\beta,\lambda)$, i.e.,
$
\sup_{\beta,\lambda\in\mathbb R^2}
|K(\beta,\lambda)|
<\infty,
$
we commit to an interpolation framework not unlike that used in deriving
Hirschman's entropy inequality for the Fourier transform, but is carried
out in the more general setting of operator-frame transforms. We establish
$L^1\rightarrow L^\infty$ and $L^2\rightarrow L^2$ bounds for the
transform $T$, interpolate these bounds via the Riesz--Thorin theorem \cite{ReedSimon2} to
obtain a Hausdorff--Young-type inequality, and subsequently
differentiate the resulting logarithmic norm inequality at
$p=q=2$ to derive a Shannon entropy relation. The details of this is provided in the Appendix.

The resulting Shannon entropy is given by 
\begin{equation}
H(|f_O|^2)
+
H(|f_P|^2)
\ge
-2\ln \left[\sup_{\beta,\lambda}
|K(\beta,\lambda)|\right ],
\end{equation}
where
$H$ denotes the Shannon differential entropy (when finite), given by 
\begin{equation}\label{HP}
H_O
=
-\int_{\mathbb R^2}
|f_O(\lambda)|^2\ln |f_O(\lambda)|^2\,
d^2\lambda,~~
H_P
=
-\int_{\mathbb R^2}
|f_P(\beta)|^2\ln |f_P(\beta)|^2\,
d^2\beta.
\end{equation}
The inequality therefore expresses an operator-level
entropic uncertainty principle: localization of an operator
in one operator basis necessarily implies delocalization in
the other. In particular, no
Hilbert--Schmidt operator can possess arbitrarily sharp
coefficient distributions simultaneously in both operator representations. In this sense,
the uncertainty principle emerges not at the level of
observables or quantum states alone, but intrinsically at
the level of operator representations themselves and is determined by the overlap of the operators. We now show that when the operator families satisfy a mutual unbiasedness condition, the associated transform acquires the structure of a generalized Fourier transform. This additional structure permits the use of Fourier-analytic techniques and leads to a distinguished class of entropic uncertainty relations.

\section{Mutually Unbiased Operator Frames}

\noindent Let us now assume that the operator families $\{O(\lambda)\}$ and
$\{P(\beta)\}$ are mutually unbiased in in the sense that
their trace overlaps, whenever well-defined, satisfy,
\begin{equation}\label{mub}
\left|\mathrm{Tr}[P^\dagger(\beta)\, O(\lambda)]\right| = C, \quad \forall \lambda,\beta \in \mathbb{R}^2.
\end{equation}
This implies that the overlap can be written as
\begin{equation}
\mathrm{Tr}[P^\dagger(\beta)\, O(\lambda)]
=
C\, e^{-i\phi(\beta,\lambda)},
\end{equation}
for some real-valued phase function $\phi(\beta,\lambda)$. The trace overlap appearing in equation (\ref{mub}) admits a natural
interpretation in terms of the action of the operator
families on the underlying Hilbert space
\(\mathcal H=L^2(\mathbb R)\). Recall that for an orthonormal basis $\{e_n\}\subset\mathcal H$, the trace may be
expressed as
\begin{equation}
\mathrm{Tr}[P^\dagger(\beta)O(\lambda)]
=
\sum_n
\langle P(\beta)e_n,\,O(\lambda)e_n\rangle.
\end{equation}
Likewise, in cases involving a continuous resolution of the identity on
$\mathcal H$,
\[
I=\int |\xi\rangle\langle\xi|\,d\xi,
\]
one formally has
\begin{equation}
\mathrm{Tr}[P^\dagger(\beta)O(\lambda)]
=
\int
\langle \xi|
P^\dagger(\beta)O(\lambda)
|\xi\rangle
\,d\xi.
\end{equation}
The trace pairing may therefore be understood as the total overlap
between the transformed functions generated by the two operator
families, summed or integrated over a complete resolution of the
underlying Hilbert space. In this sense, the mutual unbiasedness
condition expresses the fact that the absolute value of this overlap
possesses a constant magnitude independent of $\lambda$ and $\beta$.
This reflects the idea that no element of one operator family is
preferentially aligned with any element of the other.

As the parameters $\lambda$ and $\beta$ serve as labels, or coordinates,
on the homogeneous space $\mathbb{R}^2$, no point in this space is
distinguished a priori. It is therefore natural to require that the
relation between the operator families respects this homogeneity.
Accordingly, we impose covariance under translations of the labels,
namely that for any $\lambda\in\mathbb{R}^2$, a shift $\lambda \mapsto \lambda+\lambda_0$ modifies the
overlap only by a phase factor,
\begin{equation}
\mathrm{Tr}[P^\dagger(\beta) O(\lambda+\lambda_0)]
=e^{-i\chi_\beta(\lambda_0)}\,
\mathrm{Tr}\!\left[P^\dagger(\beta)\,O(\lambda)\right],
\end{equation}
with an analogous condition for translations in $\beta$.
We should emphasise that this choice is not claimed to be unique; rather, it is
the simplest symmetry-compatible assumption ensuring that the relation
between the operator families depends only on relative structure in the
label space. Assuming sufficient regularity of the phase function
\(\phi(\beta,\lambda)\), we have the following proposition.

\begin{prop}
Let $\{O(\lambda)\}$ and
$\{P(\beta)\}$ with $\lambda,\beta\in\mathbb R^2$ be two continuous operator families satisfying
\[
\mathrm{Tr}\!\left[P^\dagger(\beta)O(\lambda)\right]
=
Ce^{-i\phi(\beta,\lambda)},
~ C>0.
\]
Suppose that the phase function
$\phi(\beta,\lambda)$ is sufficiently regular (for instance
continuous in each label) and furthermore that their overlap is covariant under translations of
the labels, in the sense that
\begin{align}
\mathrm{Tr}\!\left[P^\dagger(\beta)O(\lambda+\lambda_0)\right]
&=
e^{-i\chi_\beta(\lambda_0)}
\mathrm{Tr}\!\left[P^\dagger(\beta)O(\lambda)\right],\\
\mathrm{Tr}\!\left[P^\dagger(\beta+\beta_0)O(\lambda)\right]
&=
e^{-i\eta_\lambda(\beta_0)}
\mathrm{Tr}\!\left[P^\dagger(\beta)O(\lambda)\right],
\end{align}
where \(\chi_\beta(\lambda_0)\) is independent of
\(\lambda\) and \(\eta_\lambda(\beta_0)\) is independent
of \(\beta\). Then the overlap necessarily assumes the form
\[
\mathrm{Tr}[P^\dagger(\beta)O(\lambda)]
=
Ce^{-iB(\beta,\lambda)},
\]
where \(B:\mathbb R^2\times\mathbb R^2\to\mathbb R\)
is bilinear.
\end{prop}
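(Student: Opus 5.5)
The plan is to turn the two covariance conditions into functional equations for the phase $\phi$, and solve them with the continuous Cauchy equation. Write $\mathrm{Tr}[P^\dagger(\beta)O(\lambda)]=Ce^{-i\phi(\beta,\lambda)}$. Since the modulus $C>0$ is common to both sides, the first covariance condition reduces to
\[
\phi(\beta,\lambda+\lambda_0)\equiv\phi(\beta,\lambda)+\chi_\beta(\lambda_0)\pmod{2\pi}.
\]
Setting $\lambda=0$ identifies $\chi_\beta(\lambda_0)\equiv\phi(\beta,\lambda_0)-\phi(\beta,0)$. Hence, for each fixed $\beta$, the function $\psi_\beta(\lambda):=\phi(\beta,\lambda)-\phi(\beta,0)$ satisfies
\[
\psi_\beta(\lambda+\lambda_0)\equiv\psi_\beta(\lambda)+\psi_\beta(\lambda_0)\pmod{2\pi}.
\]
So $\lambda\mapsto e^{-i\psi_\beta(\lambda)}$ is a continuous character of $\mathbb R^2$.

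The first key step is to remove the ambiguity modulo $2\pi$. Because $\phi$ is continuous in $\lambda$, and $\mathbb R^2$ is simply connected, we may choose the continuous lift of $\psi_\beta$ with $\psi_\beta(0)=0$. The defect
\[
\psi_\beta(\lambda+\lambda_0)-\psi_\beta(\lambda)-\psi_\beta(\lambda_0)
\]
is then a continuous function of $(\lambda,\lambda_0)$ taking values in $2\pi\mathbb Z$. It vanishes at $\lambda=\lambda_0=0$, so it vanishes identically. Thus $\psi_\beta$ solves the additive Cauchy equation exactly. By continuity it is $\mathbb R$-linear, $\psi_\beta(\lambda)=L_\beta(\lambda)$. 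I expect this lifting argument to be the main technical obstacle: it is exactly where the regularity hypothesis is consumed, and without it wild (Hamel-basis) solutions would appear. The result of this step is
\[
\phi(\beta,\lambda)\equiv g(\beta)+L_\beta(\lambda),\qquad g(\beta):=\phi(\beta,0).
\]

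Next I will feed this into the second covariance condition,
\[
\phi(\beta+\beta_0,\lambda)-\phi(\beta,\lambda)\equiv\eta_\lambda(\beta_0),
\]
which must hold independently of $\beta$.
\begin{itemize}
\item Taking $\lambda=0$ shows that $g(\beta+\beta_0)-g(\beta)$ is independent of $\beta$. The same lifting-plus-Cauchy argument then gives $g(\beta)=c+\ell(\beta)$, with $\ell$ linear.
\item Subtracting this from the general case shows that $L_{\beta+\beta_0}(\lambda)-L_\beta(\lambda)$ is independent of $\beta$. Since $L_\beta(\lambda)$ is already real-valued and linear in $\lambda$, the map $\beta\mapsto L_\beta(\lambda)$ is continuous and additive up to a constant. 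It is therefore affine: $L_\beta(\lambda)=m(\lambda)+B(\beta,\lambda)$, with $m=L_0$ linear and $B$ linear in $\beta$.
\end{itemize}
Linearity of $B$ in $\lambda$ is inherited from the linearity of each $L_\beta$, so $B$ is bilinear. Altogether,
\[
\phi(\beta,\lambda)\equiv c+\ell(\beta)+m(\lambda)+B(\beta,\lambda).
\]

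The final step is to dispose of the residual terms $c+\ell(\beta)+m(\lambda)$. These are exactly the phase freedoms left invisible by the hypotheses. Replacing $O(\lambda)\mapsto e^{-im(\lambda)}O(\lambda)$ and $P(\beta)\mapsto e^{i(\ell(\beta)+c)}P(\beta)$ preserves the orthogonality relations and the constant $C$. It also preserves both covariance conditions, since it only modifies $\chi$ and $\eta$ by label-independent amounts. After this rephasing, the overlap takes the stated form $Ce^{-iB(\beta,\lambda)}$. I will state explicitly that the conclusion holds up to this gauge freedom, or equivalently under the normalization $\phi(\beta,0)=\phi(0,\lambda)=0$. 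This is the only sense in which "necessarily" can hold, because any choice of $c,\ell,m$ is consistent with the hypotheses.
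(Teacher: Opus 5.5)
Your proposal is correct and follows the same route as the paper. Both arguments turn each covariance condition into an additive (Cauchy-type) equation for $\phi$ in one label, use continuity to get affine dependence in each label, and absorb the single-label phases into redefinitions of $O(\lambda)$ and $P(\beta)$; you also supply details the paper only asserts, namely the vanishing of the continuous $2\pi\mathbb{Z}$-valued defect and the explicit decomposition $\phi=c+\ell(\beta)+m(\lambda)+B(\beta,\lambda)$.

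One small slip: literally replacing $O(\lambda)$ by $e^{-im(\lambda)}O(\lambda)$ and $P(\beta)$ by $e^{i(\ell(\beta)+c)}P(\beta)$ doubles the residual phase instead of cancelling it. Either flip both signs, or read them as substitutions $O=e^{-im}\tilde O$, $P=e^{i(\ell+c)}\tilde P$.
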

\begin{proof}
Given $\mathrm{Tr}\!\left[P^\dagger(\beta)O(\lambda)\right]
=
C e^{-i\phi(\beta,\lambda)}$ for some real-valued phase function $\phi(\beta,\lambda)$, the first covariance condition gives
\begin{equation}
C e^{-i\phi(\beta,\lambda+\lambda_0)}
=
e^{-i\chi_\beta(\lambda_0)}
C e^{-i\phi(\beta,\lambda)}.
\end{equation}
Hence, modulo integer multiples of $2\pi$,
\begin{equation}
\phi(\beta,\lambda+\lambda_0)-\phi(\beta,\lambda)
=
\chi_\beta(\lambda_0).
\end{equation}
The right-hand side is independent of $\lambda$, so for fixed $\beta$ the
increment of $\phi$ under translations of $\lambda$ depends only on the
translation. Therefore, under mild regularity assumptions such as continuity, $\phi$ is affine in $\lambda$.
Applying the second covariance condition similarly implies that $\phi$ is
affine in $\beta$. Up to phase factors depending only on one label, which may
be absorbed into redefinitions of $O(\lambda)$ and $P(\beta)$, the remaining
phase is bilinear:
\begin{equation}
\phi(\beta,\lambda)=B(\beta,\lambda).
\end{equation}
Thus
\begin{equation}
\mathrm{Tr}\!\left[P^\dagger(\beta)O(\lambda)\right]
=
C e^{-iB(\beta,\lambda)}.
\end{equation}
\end{proof}
We shall assume throughout that the bilinear pairing \(B\), which emerges from compatibility between mutual unbiasedness and translational covariance on the label space also is non-degenerate. This assumption is motivated by the
observation that, if \(B\) were degenerate, there would
exist nonzero label directions along which the phase factor
\(e^{-iB(\beta,\lambda)}\) becomes independent of the
complementary label variable. The overlap kernel would then
become insensitive to variations along such directions,
leading to a partially trivial overlap structure despite
the constant-modulus condition. The non-degeneracy
condition therefore ensures that the pairing \(B\)
identifies the \(\beta\)- and \(\lambda\)-label spaces as
dual variables: no nonzero \(\beta\) pairs trivially with
all \(\lambda\), and no nonzero \(\lambda\) pairs
trivially with all \(\beta\). Accordingly, we write
\begin{eqnarray}\label{B}
B(\beta,\lambda)=\beta^{T}M\lambda,
\qquad
\det M\neq 0,
\end{eqnarray}
where \(M\) is an invertible real \(2\times2\) matrix. We shall make use of this equation explicitly  later.
Substituting the mutual unbiasedness relation (21) into the kernel definition of equation (\ref{kernel}), we obtain
\begin{equation}
K(\beta,\lambda)
=
\frac{C}{\sqrt{C_O C_P}}
e^{-iB(\beta,\lambda)}.
\end{equation}
We shall continue to denote by $f_O$ and $f_P$ the normalized coefficient amplitudes defined in equation (\ref{fofp}) and the transform (\ref{transform}) relating these amplitudes now assumes the form, 
\begin{equation}
f_P(\beta)
=
\frac{C}{\sqrt{C_O C_P}}
\int_{\mathbb R^2}
e^{-iB(\beta,\lambda)}
f_O(\lambda)\, d^2\lambda .
\label{eq:BFourier}
\end{equation}
Since both coefficient representations arise from Parseval-type expansions of the same Hilbert--Schmidt operator, the corresponding transform must be unitary on $L^2(\mathbb R^2)$, the forward and inverse coefficient maps must compose to the identity. Making use of the nondegeneracy of the bilinear form, we find
\begin{align}
f_O(\lambda)
&=
\left(
\frac{C}{\sqrt{C_O C_P}}
\right)^2
\int_{\mathbb R^2}
\int_{\mathbb R^2}
e^{\,iB(\beta,\lambda)}
e^{-iB(\beta,\lambda')}
f_O(\lambda')
\, d^2\beta\, d^2\lambda'
\nonumber\\
&=
\left(
\frac{C}{\sqrt{C_O C_P}}
\right)^2
\int_{\mathbb R^2}
f_O(\lambda')
\left[
\int_{\mathbb R^2}
e^{\,iB(\beta,\lambda-\lambda')}
\, d^2\beta
\right]
d^2\lambda'.
\end{align}
Writing $B(\beta,\lambda)=\beta^{T}M\lambda$ with
$\det M\neq 0$, one has the Fourier completeness relation
\begin{equation}
\int_{\mathbb R^2}
e^{\,iB(\beta,\lambda-\lambda')}
\, d^2\beta
=
\frac{(2\pi)^2}{|\det M|}
\,\delta^{(2)}(\lambda-\lambda').
\end{equation}
Therefore,
\begin{equation}
f_O(\lambda)
=
\frac{(2\pi)^2}{|\det M|}
\left(
\frac{C}{\sqrt{C_O C_P}}
\right)^2
f_O(\lambda),
\end{equation}
and consistency requires
\begin{equation}
\frac{C}{\sqrt{C_O C_P}}
=
\frac{\sqrt{|\det M|}}{2\pi}.
\label{eq:normalisation_condition}
\end{equation}
The transform between mutually unbiased coefficient amplitudes therefore becomes
\begin{equation}
f_P(\beta)
=
\frac{\sqrt{|\det M|}}{2\pi}
\int_{\mathbb R^2}
e^{-iB(\beta,\lambda)}
f_O(\lambda)\, d^2\lambda,
\label{fP}
\end{equation}
with inverse
\begin{equation}
f_O(\lambda)
=
\frac{\sqrt{|\det M|}}{2\pi}
\int_{\mathbb R^2}
e^{\,iB(\beta,\lambda)}
f_P(\beta)\, d^2\beta.
\end{equation}
An equivalent condition follows by evaluating
the relevant composition acting on \(a_P(\beta)\).
Therefore, once the overlap between the two operator bases is governed by
the bilinear and nondegenerate phase \(e^{-iB(\beta,\lambda)}\), the two coefficient
representations of the same operator are naturally related as Fourier-dual
functions with respect to the bilinear pairing \(B\). In this sense, operator mutual unbiasedness induces a Fourier relation at the level of Hilbert--Schmidt coefficient functions.

In deriving a relation between these two entropic quantities, we first return to equation (\ref{B})
and introduce the variable, $\eta\in \mathbb{R}^2$, given as,
\begin{equation}
\eta=M^T\beta \Rightarrow d^2\eta=|\det M|\,d^2\beta.
\end{equation}
We now introduce the function, $\widehat{f}_O(\eta)$, related to $f_O(\lambda)$ via the standard Fourier
\begin{equation}\label{fhat}
\widehat{f}_O(\eta)
:=
\frac{1}{2\pi}
\int_{\mathbb R^2}
f_O(\lambda)
e^{-i\eta^T\lambda}
\,d^2\lambda,
\end{equation}
with
\begin{equation}
\int_{\mathbb R^2}
|\widehat{f}_O(\eta)|^2
\,d^2\eta=1.
\end{equation}
 We then define
 \begin{equation}
H_{\widehat{O}}
=
-\int_{\mathbb R^2}
|\widehat{f}_O(\eta)|^2\ln |\hat{f}_O(\eta)|^2\,
d^2\eta,
\end{equation}
and
using the Hirschman--Beckner relation \cite{Hirschman1957,Beckner1975,Bialynicki1975} for the Fourier normalization convention of equation (\ref{fhat}) one immediately arrive at,
 \begin{equation}\label{eur}
H_O+H_{\widehat O}
\ge
2\ln(\pi e).
\end{equation}
Equation (\ref{fhat}) lets us rewrite the unitary \(B\)-Fourier relation of equation (\ref{fP}) as
 \begin{equation}
f_P(\beta)
=
\sqrt{|\det M|}\,\widehat f_O(\eta),
\end{equation}
and substituting this into $H_P$ as defined in equation (\ref{HP}), we obtain
\begin{align}
H_P
&=
-\int_{\mathbb R^2}
|\widehat f_O(\eta)|^2
\ln |\widehat f_O(\eta)|^2
\,d^2\eta
-
\ln|\det M|\nonumber\\
&=H_{\widehat O}
-
\ln|\det M|.
\end{align}
Together with equation (\ref{eur}), we obtain our desired entropic relation,
\begin{equation}\label{EUR}
H_O+H_P
\ge
2\ln(\pi e)-\ln|\det M|.
\end{equation}
The inequality (\ref{EUR}) constitutes a refinement of the general
operator-frame entropic uncertainty relation obtained in Sec.~II.
Indeed, for mutually unbiased operator frames,
$
\sup_{\beta,\lambda}|K(\beta,\lambda)|
=
\sqrt{|\det M|}/2\pi,
$
so that the interpolation argument of Sec.~II yields a lower bound
which is exceeded by (\ref{EUR}) by the constant amount
$
2\ln\!\left(e/2\right).
$
This improvement arises from the additional Fourier structure induced
by mutual unbiasedness, which permits the use of the sharp
Hirschman--Beckner theorem rather than interpolation arguments alone.

It is worth noting that the entropic relation of (\ref{EUR}) bears a formal resemblance to continuous-variable entropic uncertainty relations derived for families of observables ~\cite{Cerf1,Cerf2}, where the entropies quantify the uncertainty associated with measurement outcomes and the lower bound is expressed in terms of the determinant of a commutator matrix. In the present framework, however,  the entropies characterize the localization of operator representations in mutually unbiased operator frames rather than the statistics of measurement outcomes. Further to that, the determinant reflects the nondegeneracy of the bilinear overlap kernel defining mutual unbiasedness between operator families, which need not correspond to Hermitian observables. Interestingly, the dimensional contribution $2\ln(\pi e)$ appearing in the phase-space realization coincides formally with the corresponding term arising in the two-dimensional observable setting of continuous-variable entropic uncertainty relations ~\cite{Cerf1,Cerf2}. In fact, the two relations concern fundamentally different quantities: in the present work, the entropy is defined over operator-frame coefficients on a two-dimensional label space, whereas in the observable setting it is defined over measurement probability distributions.

The inequality becomes particularly transparent in the
canonical symplectic case \(|\det M|=1\), for which
the lower bound reduces to \(2\ln(\pi e)\), identical
to the standard two-dimensional Hirschman--Beckner
bound. We now turn to the canonical continuous-variable realization based on
displacement operators and Wigner kernels.  

\section{Displacement Operators and Wigner Kernels as a Canonical Realization}

\noindent \noindent We now specialize the general operator-frame construction to the
continuous-variable phase space $(\mathbb R^2,\Omega)$ equipped with its
canonical symplectic structure. In this setting, the displacement operators
provide a natural continuous orthogonal operator basis associated with
phase-space translations. As we shall show, the corresponding displaced parity
operators form a mutually unbiased operator family whose overlap kernel is
generated by the symplectic bilinear form. This realizes the general
Fourier--bilinear framework developed earlier in a canonical phase-space
setting and leads naturally to a symplectic Fourier duality between the two
operator representations.

Let $D(\alpha)$ be the displacement operators \cite{CahillGlauber19691},
\begin{eqnarray}
D(\alpha)= e^{\alpha a^\dagger - \alpha^* a}, ~~\alpha \in \mathbb R^2.
\end{eqnarray}
and the family of operators $\{D(\alpha)\}$ be a continuous basis for $B_2(\mathcal{H})$ fulfilling the orthogonality relation
\begin{eqnarray}
\mathrm{Tr}[D(\alpha) D^\dagger(\alpha')]
= \pi \delta^{(2)}(\alpha' - \alpha),~~\alpha\in\mathbb R^2.
\end{eqnarray}
Let us further consider the case for an orthogonal basis of a family of displaced parity operators, namely the Wigner kernel, $\Delta(\beta), \beta\in \mathbb R^2$. The Wigner phase-space representation admits a natural operator-valued kernel formulation; the Wigner function, $W(\beta)$, of a quantum state $\rho$ is expressed as the expectation value of a displaced parity operator,
 \(
W(\beta)=\mathrm{Tr}[\rho\,\Delta(\beta)]
\),
\cite{CahillGlauber19692,Royer1977}.
The  Wigner kernel is Hermitian and unitary, with eigenvalues $\pm1$, and corresponds physically to a local parity measurement about $\beta$ and is defined as
\begin{equation}
\Delta(\beta) := 2D(\beta)\,\Pi\,D^\dagger(\beta),
\end{equation}
where $\Pi = (-1)^{a^\dagger a}$ is the parity operator. 
The family $\{\Delta(\beta)\}$ forms an orthogonal, continuous operator basis and any operator can therefore be written in terms of the $\{\Delta(\beta)\}$ \cite{CahillGlauber19692}. 

Let us consider the trace overlap, $\mathrm{Tr}\!\left[\Delta^\dagger(\beta) D(\alpha)\right]$.
\begin{eqnarray}\label{overlap}
\mathrm{Tr}\!\left[\Delta^\dagger(\beta)D(\alpha)\right]=2e^\kappa\mathrm{Tr}\!\left[D(\alpha)\Pi\right]
\end{eqnarray} 
where $\kappa=\alpha\beta^*-\alpha^*\beta=2i\,\operatorname{Im}(\alpha^*\beta)$.
\begin{align}
\operatorname{Tr}[D(\alpha)\Pi]
&=\int_{\mathbb R^2}\frac{d^2\zeta}{\pi}\,
\langle \zeta|D(\alpha)\Pi|\zeta\rangle
=\int_{\mathbb R^2}\frac{d^2\zeta}{\pi}\,
\langle \zeta|D(\alpha)|-\zeta\rangle. \\
&=\int_{\mathbb R^2}\frac{d^2\zeta}{\pi}\,\exp\!\left(-\frac{|\alpha|^2}{2}\right)\,
\exp\!\left(-2|\zeta|^2+\alpha\zeta^*+\alpha^*\zeta\right)=\frac{1}{2}.
\end{align}
Hence from equation (\ref{overlap}),
\begin{eqnarray}
|\mathrm{Tr}\!\left[\Delta^\dagger(\beta)D(\alpha)\right]|=|2e^\kappa\mathrm{Tr}\!\left[D(\alpha)\Pi\right]|=|e^\kappa|=1.
\end{eqnarray} 
Thus, the families $\{D(\alpha)\}$ and $\{\Delta(\beta)\}$ form a pair of
mutually unbiased continuous operator frames as per our definition of equation (\ref{mub}) with the phase kernel being not only bilinear but symplectic.
In the present case, the bilinear phase appearing in the overlap kernel is
the symplectic form on $\mathbb R^2$. With
$\alpha,\beta\in\mathbb{R}^2$, one has
\begin{equation}
\kappa
=
\alpha\beta^*-\alpha^*\beta
=-2i\alpha^{T}
J
\beta,
\end{equation}
where $J$ is the antisymmetric matrix satisfying $\det J = 1$. Equivalently, the bilinear kernel in this case, $B_{\alpha\beta}$ may be written in the form
$B_{\alpha\beta}(\alpha,\beta)=\alpha^T M_{\alpha\beta}\beta$ with $M_{\alpha\beta}=-2J$, so that
$|\det M_{\alpha\beta}|=4$. The overlap kernel is therefore generated by a
nondegenerate antisymmetric bilinear form associated with the canonical
symplectic structure underlying continuous-variable phase space.

\subsection{Entropic uncertainty between displacement and Wigner kernels}

\noindent Let $A_{D\Delta}\in\mathcal{B}_2(\mathcal{H})$ be a Hilbert-Schmidt operator, such that 
\begin{align}
A_{D\Delta} = \frac{1}{\pi} \int_{\mathbb R^2} a_D(\alpha)\, D(\alpha)\, d^2\alpha 
= \frac{1}{\pi} \int_{\mathbb R^2} a_\Delta(\beta)\, \Delta(\beta)\, d^2\beta.
\end{align}
with the coefficient functions, $a_D(\alpha)$ and $a_\Delta(\beta)$
\begin{align}\label{coef}
a_D(\alpha) := \mathrm{Tr}\!\left[D^\dagger(\alpha)A_{D\Delta}\right],~~
a_\Delta(\beta) := \mathrm{Tr}\!\left[\Delta^\dagger(\beta)A_{D\Delta}\right].
\end{align}
The coefficient intensity $|a_D(\alpha)|^2$ defines a measure on $\mathbb R^2$ whose total mass is $\pi \|A\|_{\mathrm{HS}}^2$.
Let us define the normalized coefficient densities
\begin{align}
f_D(\alpha) := \frac{a_D(\alpha)}{\sqrt{\pi}\|A_{D\Delta}\|_{\mathrm{HS}}},~~f_\Delta(\beta) := \frac{a_\Delta(\beta)}{\sqrt{\pi}\|A_{D\Delta}\|_{\mathrm{HS}}},
\end{align}
which satisfy
\begin{equation}
\int_{\mathbb R^2} |f_D(\alpha)|^2\,d^2\alpha
=
\int_{\mathbb R^2} |f_\Delta(\beta)|^2\,d^2\beta
=
1.
\end{equation}
The corresponding Shannon differential entropies
\[
H_D:=-\int_{\mathbb C}|f_D(\alpha)|^2\log |f_D(\alpha)|^2\,d^2\lambda,
\qquad
H_\Delta:=-\int_{\mathbb C}|f_\Delta(\beta)|^2\log |f_\Delta(\beta)|^2\,d^2\beta,
\]
together with $\det M_{\alpha\beta}=4$, applying equation (\ref{EUR}), we obtain,
\[
H_D+H_\Delta \ge 2\ln\!\left(\frac{\pi e}{2}\right).
\]
The above uncertainty relation refers to the spreads of the coefficient distributions of $A_{D\Delta}$ in the corresponding representations. The inequality therefore implies that an operator cannot be simultaneously sharply concentrated in both the displacement and Wigner-kernel representations, reflecting the Fourier duality between these operator bases.

\section{Position and momentum operators as mutually unbiased frames}

\noindent The general operator-frame construction may also be realized
through the operator families generated by the
position and momentum eigenstates.

Let $\{|x\rangle\}_{x\in\mathbb R}$ and
$\{|p\rangle\}_{p\in\mathbb R}$ denote the generalized
eigenbases of the canonical position and momentum operators,
satisfying
\begin{equation}
\langle x|x'\rangle = \delta(x-x'),
\qquad
\langle p|p'\rangle = \delta(p-p'),
\end{equation}
We now consider the operator families, $\{X(x,x')\}$ and $\{Y(p,p')\}$,
\begin{equation}
X(x,x') := |x\rangle\langle x'|,
\qquad
Y(p,p') := |p\rangle\langle p'|,
\end{equation}
which form generalized continuous operator bases on the
space of operators over $L^2(\mathbb R)$.
Their orthogonality relations are
\begin{eqnarray}
\mathrm{Tr}
\!\left[
X^\dagger(x,x')
X(y,y')
\right]
&=&
\delta(x-y)\delta(x'-y'),
\\
\mathrm{Tr}
\!\left[
Y^\dagger(p,p')
Y(q,q')
\right]
&=&
\delta(p-q)\delta(p'-q').
\end{eqnarray}
Let us now evaluate the overlap between the two operator
families. 
\begin{eqnarray}
\mathrm{Tr}
\!\left[
Y^\dagger(p,p')
X(x,x')
\right]
&=&
\mathrm{Tr}
\!\left[
|p'\rangle\langle p|
x\rangle\langle x'|
\right]
\\
&=&
\langle p|x\rangle
\langle x'|p'\rangle.
\end{eqnarray}

Using the Fourier kernel relation between position and
momentum eigenstates,
\begin{equation}
\langle p|x\rangle
=
\frac{1}{\sqrt{2\pi}}
e^{-ipx},
\qquad
\langle x'|p'\rangle
=
\frac{1}{\sqrt{2\pi}}
e^{ip'x'},
\end{equation}
we obtain
\begin{equation}
\mathrm{Tr}
\!\left[
Y^\dagger(p,p')
X(x,x')
\right]
=
\frac{1}{2\pi}
e^{-i(px-p'x')}.
\end{equation}

Hence,
\begin{equation}
\left|
\mathrm{Tr}
\!\left[
Y^\dagger(p,p')
X(x,x')
\right]
\right|
=
\frac{1}{2\pi},
\end{equation}
independent of $(x,x')$ and $(p,p')$.
The two operator families therefore form a pair of mutually
unbiased continuous operator frames in the sense of
equation (5).
The overlap kernel is generated by the bilinear form, $B_{xp}$, given as
\begin{equation}
B_{xp}\!\left[
(p,p'),(x,x')
\right]
=
px-p'x',
\end{equation}
which may be written in matrix form as
\begin{equation}
B_{xp}(\beta,\lambda)
=
(p,p') M_{xp} (x,x')^T,~~
\text{with} ~M_{xp} =
\begin{pmatrix}
1 & 0 \\
0 & -1
\end{pmatrix}
\end{equation}
being a symmetric matrix of determinant $-1$.

\subsection{Position-Momentum Uncertainty Principle}
\noindent For a Hilbert--Schmidt operator $A\in \mathcal B_2(\mathcal H)$,
we may write
\begin{equation}
A_{XY} =
\int_{\mathbb R^2}
a_X(x,x')\, |x\rangle\langle x'|\, dx\,dx'
=
\int_{\mathbb R^2}
a_Y(p,p')\, |p\rangle\langle p'|\, dp\,dp',
\end{equation}
where
\begin{equation}
a_X(x,x') =
\mathrm{Tr}\!\left[O^\dagger(x,x')A_{XY}\right],
\qquad
a_Y(p,p') =
\mathrm{Tr}\!\left[P^\dagger(p,p')A_{XY}\right].
\end{equation}
Introducing the normalized coefficient amplitudes
\begin{equation}
f_X(x,x')=
\frac{a_O(x,x')}{\|A_{XY}\|_{\mathrm{HS}}},
\qquad
f_Y(p,p')=
\frac{a_P(p,p')}{\|A_{XY}\|_{\mathrm{HS}}},
\end{equation}
one has
\begin{equation}
\int_{\mathbb R^2}|f_X(x,x')|^2\,dx\,dx'
=
\int_{\mathbb R^2}|f_Y(p,p')|^2\,dp\,dp'
=1.
\end{equation}
The associated differential entropies are
\begin{align}
H_X
&=
-\int_{\mathbb R^2}
|f_X(x,x')|^2
\ln |f_X(x,x')|^2\,dx\,dx',
\nonumber\\
H_Y
&=
-\int_{\mathbb R^2}
|f_Y(p,p')|^2
\ln |f_Y(p,p')|^2\,dp\,dp'.
\end{align}
Making use of the general entropic relation of equation (\ref{EUR}) immediately with $|\det M_{xp}|=1$, we have
\begin{equation}
H_X+H_Y\geq 2\ln(\pi e).
\end{equation}
This inequality states that a Hilbert--Schmidt operator cannot
be simultaneously sharply localized in both the position-dyadic representation and the momentum-dyadic
representation. 

\subsection{Pure-State Density Operators and the Connection to Position--Momentum Uncertainty}

\noindent It is immediately tempting to suspect a relationship between the
entropic uncertainty relation derived above and the ordinary
position--momentum uncertainty principle. To clarify the connection between the present operator-kernel
formalism and the ordinary position--momentum uncertainty principle,
consider a pure quantum state
$\rho = |\psi\rangle\langle\psi|.
$
In the position representation,
$
\psi(x)=\langle x|\psi\rangle,
$
so that
\begin{equation}
|\psi\rangle
=
\int_{\mathbb R}\psi(x)|x\rangle\,dx,
\qquad
\langle\psi|
=
\int_{\mathbb R}\psi^*(x')\langle x'|\,dx'.
\end{equation}
The associated density operator therefore admits the expansion
\begin{equation}
\rho
=
\iint_{\mathbb R^2}
\psi(x)\psi^*(x')
\,|x\rangle\langle x'|
\,dx\,dx'.
\end{equation}
Hence, the coefficient kernel in the position-dyadic representation is
given by
\begin{equation}
\rho(x,x')
=
\langle x|\rho|x'\rangle
=
\psi(x)\psi^*(x').
\end{equation}
Likewise, in the momentum representation,
\begin{equation}
\widetilde{\rho}(p,p')
=
\langle p|\rho|p'\rangle
=
\widetilde{\psi}(p)\widetilde{\psi}^{*}(p'),
\end{equation}
where
\begin{equation}
\widetilde{\psi}(p)
=
\frac{1}{\sqrt{2\pi}}
\int_{\mathbb R}
\psi(x)e^{-ipx}\,dx.
\end{equation}
Substituting the Fourier transform relation for
$\widetilde{\psi}(p)$ and its complex conjugate yields
\begin{equation}
\widetilde{\rho}(p,p')
=
\frac{1}{2\pi}
\iint_{\mathbb R^2}
\rho(x,x')
\,e^{-i(px-p'x')}
\,dx\,dx'.
\end{equation}
Thus, the dyadic Fourier relation appearing in the
present framework is inherited directly from the
ordinary Fourier relation between the position- and
momentum-space wavefunction amplitudes.
The diagonal components of the kernels recover the ordinary position
and momentum probability densities,
\begin{equation}
\rho(x,x)
=
|\psi(x)|^2,
\qquad
\widetilde{\rho}(p,p)
=
|\widetilde{\psi}(p)|^2.
\end{equation}
Accordingly, the standard position--momentum entropic uncertainty
relation ultimately originates from the Fourier duality between the
position-space and momentum-space wavefunction amplitudes
$\psi(x)$ and $\widetilde{\psi}(p)$, with the associated probability
densities $|\psi(x)|^2$ and $|\widetilde{\psi}(p)|^2$ arising through
the Born rule. In contrast, the present operator-level framework concerns
the localization properties of the complete density-operator
representations
$
\rho(x,x')$ and
$\widetilde{\rho}(p,p'),
$
including both diagonal probability contributions and
off-diagonal coherence terms. The resulting entropic uncertainty relation therefore constrains not
only the simultaneous localization of position and momentum
probability distributions, but also the localization of the
corresponding coherence structure encoded in the off-diagonal operator
components.



\section{Conclusion}
\noindent The present work develops a general operator-frame framework for entropic uncertainty relations in Hilbert--Schmidt space. Beginning with continuous operator frames and their associated coefficient representations, we established a general entropic uncertainty relation by combining endpoint norm estimates with Riesz--Thorin interpolation. The entropic relation quantifies the extent to which two operator representations may be simultaneously localized.

We then identified a distinguished class of operator frames satisfying a mutual unbiasedness condition. 
Under natural assumptions of translational covariance and
nondegeneracy, the overlap structure between such frames induces a
bilinear Fourier transform relating their normalized coefficient
amplitudes. The resulting construction unifies a variety of Fourier
dualities encountered in continuous-variable quantum theory within a single framework, with different physical realizations arising from different choices of bilinear pairing.While arbitrary operator frames obey the general entropic uncertainty relation derived from interpolation theory, mutually unbiased operator frames possess a Fourier structure to satisfy a strictly stronger Hirschman--Beckner-type uncertainty relation. In this sense, mutually unbiased operator frames occupy a special position within the broader operator-frame framework, providing a natural operator-level analogue of Fourier duality and complementarity.

The displacement operators and Wigner kernels furnish a canonical continuous-variable realization of the construction. In this setting, the familiar relationship between characteristic-function and Wigner representations emerges as a concrete instance of the general theory, with the symplectic structure encoded through the bilinear pairing underlying the Fourier transform. Likewise, the Cartesian dyadic construction based on position and momentum eigenstates illustrates how the resulting operator-level framework naturally encompasses the ordinary position--momentum Fourier duality while extending beyond probability distributions to the full operator kernels, including their coherence structure.

A natural extension for future investigation is the exploration of additional realizations of mutually unbiased operator frames beyond the two-dimensional label spaces considered in the phase-space and Cartesian constructions presented here. More generally, one may consider operator frames indexed by higher-dimensional parameter spaces (corresponding for example, to multimode continuous-variable systems) or generated from alternative families of continuous-variable eigenstates. Such extensions may reveal new forms of operator-level complementarity and associated entropic uncertainty relations.
An interesting future direction would also include the construction of operator frames from rotated quadrature eigenstates \cite{Cerf1,Cerf2}. Since distinct quadrature eigenbases possess constant-modulus overlaps, the associated dyadic operator families may provide natural candidates for further realizations of the mutual unbiasedness structure considered here. In such settings, the diagonal sectors of the corresponding operator kernels yield the probability distributions associated with quadrature measurements. It would therefore be of interest to investigate whether observable-level entropic uncertainty relations, such as those obtained for rotated quadratures \cite{Cerf1,Cerf2}, can be understood as arising from diagonal restrictions of a more general operator-frame framework. Such a connection would further illuminate the relationship between conventional uncertainty relations for measurement outcomes and the broader operator-level complementarity developed in the present work.

\section{Appendix}
\noindent We assume that the coefficient-function space associated with the
operator frames admits a dense regular subspace
$\mathcal D\subset L^1\cap L^2$, playing a role analogous to that of
the Schwartz space in classical Fourier analysis.
In order to apply the Riesz--Thorin interpolation theorem \cite{ReedSimon2}, we begin with estimating an endpoint via  eq.~\eqref{transform},
\begin{align}
|f_P(\beta)|
&=
\left|
\int
K(\beta,\lambda)
f_O(\lambda)\,
d^2\lambda
\right|
\nonumber\le
\int
|K(\beta,\lambda)|
|f_O(\lambda)|
\,d^2\lambda
\nonumber\\
&\le
C_K
\int
|f_O(\lambda)|
\,d^2\lambda
\nonumber=
C_K\|f_O\|_1 .
\end{align}
Taking the supremum over $\beta$ gives
$\|Tf_O\|_\infty
\le
C_K\,\|f_O\|_1.$
This establishes one endpoint bound. Equation (\ref{endpoint2}) which gives $\|f_O\|_2=\|f_P\|_2$, establishes another endpoint bound,
$\|Tf_O\|_2
=
\|f_O\|_2$.
%
Applying the Riesz--Thorin interpolation theorem \cite{ReedSimon2} to the endpoint bounds
yields
\begin{equation}
\|f_P\|_q
\le
C_K^{\,1-\theta}
\,
\|f_O\|_p,
\end{equation}
where
\begin{equation}
\frac{1}{p}
=
1-\frac{\theta}{2},
\qquad
\frac{1}{q}
=
\frac{\theta}{2},
\qquad
0\le\theta\le1.
\end{equation}
Eliminating $\theta$, these relations give
$q=p/(p-1)$ and $1-\theta=2/p-1$. Hence
\begin{equation}
\|f_P\|_q
\le
C_K^{\,\frac{2}{p}-1}
\,
\|f_O\|_p,
\qquad
1\le p\le2,
\qquad
\frac1p+\frac1q=1.
\label{eq:appendix-HY}
\end{equation}
%
%
Taking logarithms of Eq.~\eqref{eq:appendix-HY} gives
\begin{equation}
\ln \|f_P\|_q
\le
\left(\frac{2}{p}-1\right)\ln C_K
+
\ln \|f_O\|_p ,
\label{eq:logHY}
\end{equation}
with $q=p/(p-1)$.
%
%
%
%
Differentiating Eq.~\eqref{eq:logHY} with respect to $p$
and evaluating at $p=q=2$ yields
\begin{equation}
H(|f_O|^2)
+
H(|f_P|^2)
\ge
-2\ln C_K,
\end{equation}
where
$H$ denotes the Shannon differential entropy.
Since $\mathcal D$ is dense in $L^2(\mathbb R^2)$, the above estimates extend
to all normalized coefficient functions in $L^2(\mathbb R^2)$.


\end{document}